
\documentclass[twoside,leqno,twocolumn]{article}

\usepackage[letterpaper]{geometry}

\usepackage{hyperref}
\usepackage{ltexpprt}
\usepackage{color}
\usepackage{bibentry}
\usepackage{amsmath,amsfonts,amssymb,pifont,marvosym}
\usepackage{wasysym}
\usepackage{biblatex}
\usepackage{multirow}
\usepackage{booktabs}
\usepackage{longtable}
\usepackage{tabularx}
\usepackage{graphicx}
\usepackage{subcaption}
\usepackage[breakall]{truncate}
\usepackage{datatool}
\usepackage{xifthen}
\usepackage{lipsum}
\DTLsetseparator{ = }

\usepackage{tikz}
\usetikzlibrary{calc}

\usepackage{pgfplots}
\usepgfplotslibrary{external}
\tikzexternalize
\tikzsetexternalprefix{experiments/}

\addbibresource{mt_kahypar.bib}

\newcommand{\placeholder}[2]{\DTLfetch{#1}{key}{#2}{value}}
\newcommand{\externalizedfigure}[2]{
  \includegraphics{experiments/mt_kahypar-figure#1}
}

\newcommand{\mtkahypar}{\texttt{Mt-KaHyPar}}
\newcommand{\mtkahip}{\texttt{Mt-KaHIP}~\cite{MT-KAHIP}}
\newcommand{\mtmetis}{\texttt{Mt-Metis}~\cite{MT-METIS-REFINEMENT,MT-METIS}}

\newcommand{\parkway}{\texttt{Parkway}~\cite{PARKWAY-2}}
\newcommand{\zoltan}{\texttt{Zoltan}~\cite{ZOLTAN}}

\newcommand{\parmetis}{\texttt{ParMetis}~\cite{PARMETIS}}

\newcommand{\parhip}{\texttt{ParHIP}~\cite{PARHIP}}
\newcommand{\kahyparnocite}{\texttt{KaHyPar}}
\newcommand{\kahypar}{\kahyparnocite~\cite{KAHYPAR-K, KAHYPAR-MF, KAHYPAR-CA}}
\newcommand{\hmetis}{\texttt{hMetis}~\cite{HMETIS-K, HMETIS}}
\newcommand{\patoh}{\texttt{PaToH}~\cite{PATOH}}
\newcommand{\kahyparhfc}{\texttt{KaHyPar-HFC}~\cite{KAHYPAR-HFC}}
\newcommand{\kahyparca}{\texttt{KaHyPar-CA}~\cite{KAHYPAR-CA}}

\newcommand{\hmetisr}{\texttt{hMetis-R}}
\newcommand{\patohq}{\texttt{PaToH-Q}}
\newcommand{\patohd}{\texttt{PaToH-D}}
\newcommand{\patohs}{\texttt{PaToH-S}}

\newcommand{\Partition}{\ensuremath{\mathrm{\Pi}}}%
\newcommand{\incnets}{\ensuremath{\mathrm{I}}}%
\newcommand{\pinsinpart}{\ensuremath{\mathrm{\Phi}}}

\newcommand{\con}{\ensuremath{\lambda}}
\newcommand{\conset}{\ensuremath{\Lambda}}

\newcommand{\outmoves}{\mu_{\text{out}}}
\newcommand{\inmoves}{\mu_{\text{in}}}

\newcommand{\maxdeg}[1]{\ensuremath{\Delta_{#1}}}
\newcommand{\maxsize}[1]{\ensuremath{\Delta_{#1}}}

\newcommand{\meddeg}{\ensuremath{\widetilde{d(v)}}}
\newcommand{\medsize}{\ensuremath{\widetilde{|e|}}}

\newcommand{\cluster}{\mathcal{C}}

\newcommand{\SeqAlgo}[1]{\texttt{#1}}
\newcommand{\ExpAlgo}[2]{\texttt{#1}~$#2$}

\definecolor{fuchsiapink}{rgb}{1.0, 0.47, 1.0}

\definecolor{utahcrimson}{rgb}{0.83, 0.0, 0.25}

\newcommand{\plusplus}{\texttt{++}}

\newcommand{\gpp}[1]{g\plusplus#1}

\begin{document}

\title{
	\Large Scalable Shared-Memory Hypergraph Partitioning
	\thanks{This work was partially supported by DFG grants WA654/19-2, SA933/10-2 and SA933/11-1. The authors acknowledge support by the state of Baden-Württemberg through bwHPC. The authors also thank Michael Hamann for fruitful discussions and ideas.}
}

\author{
Lars Gottesbüren\thanks{Karlsruhe Institute of Technology, Karlsruhe, Germany. \{lars.gottesbueren, tobias.heuer, sanders\}@kit.edu, research@sebastianschlag.de}
\and Tobias Heuer\footnotemark[2]
\and Peter Sanders\footnotemark[2]
\and Sebastian Schlag\footnotemark[2]
}
\date{}

\maketitle







\begin{abstract} \small\baselineskip=9pt

Hypergraph partitioning is an important preprocessing step for optimizing data placement and minimizing communication volumes in high-performance computing applications.
To cope with ever growing problem sizes, it has become increasingly important to develop fast parallel partitioning algorithms whose solution quality is competitive with existing sequential algorithms.

To this end, we present \mtkahypar, the first shared-memory multilevel hypergraph partitioner with
parallel implementations of many techniques used by the sequential, high-quality partitioning systems:
a parallel coarsening algorithm that uses parallel community detection as guidance,
initial partitioning via parallel recursive bipartitioning with work-stealing,
a scalable label propagation refinement algorithm, and the first fully-parallel direct $k$-way formulation of the classical FM algorithm.

Experiments performed on a large benchmark set of instances from various application domains demonstrate the scalability and effectiveness of our approach. With 64 cores, we observe self-relative speedups of up to 51 and a harmonic mean speedup of 23.5.
In terms of solution quality, we outperform the distributed hypergraph partitioner \texttt{Zoltan} on 95\% of the instances while also being a factor of 2.1 faster.
With just four cores, \mtkahypar~is also slightly faster than the fastest sequential multilevel partitioner \texttt{PaToH} while producing better solutions on 83\% of all instances. The sequential high-quality partitioner \texttt{KaHyPar} still finds better solutions than our parallel approach, especially when using max-flow-based refinement. This, however, comes at the cost of considerably longer running times.

\end{abstract}

\section{Introduction}

Balanced $k$-way hypergraph partitioning (HGP) is a classical, well-studied optimization problem, whose goal is to divide the vertices of a hypergraph into $k$ blocks of bounded size while minimizing an objective function on hyperedges that connect two or more blocks.
A commonly used objective function is the \emph{connectivity metric}, which aims to minimize the number of blocks connected by a hyperedge.
HGP has applications in various domains such as VLSI design~\cite{ALPERT-SURVEY}, scientific computing~\cite{PATOH}, storage sharding for distributed databases~\cite{schism, SHP}, and hypergraph processing frameworks for network analysis~\cite{hg-processing-framework-MESH, hg-processing-hyperx}.
Generally speaking, it can be used to accurately optimize data or workload distribution for distributed computer systems with limited memory and expensive communication.
Since hypergraph partitioning is NP-hard~\cite{LENGAUER}, most state-of-the-art partitioners use heuristic \emph{multilevel} algorithms~\cite{HENDRICKSON}.

These algorithms create a hierarchy of successively smaller and structurally similar hypergraphs by \emph{contracting} pairs or clusters of vertices (\emph{coarsening} phase).
Once the coarsest hypergraph is small enough, an \emph{initial partition} is computed.
Subsequently, the contractions are reverted level-by-level, and, on each level, \emph{local search} heuristics are used to improve the partition from the previous level (\emph{refinement phase}), e.g. using the classical FM algorithm~\cite{FM} or simple greedy heuristics~\cite{HMETIS-K}.

In recent years, problem sizes have grown beyond what full-fledged multilevel single-threaded codes can handle in reasonable time.
This resulted in very fast and simple algorithms that avoid the multilevel framework but yield solutions of inferior quality~\cite{TheHypeIsOver}.
As solution quality directly impacts the scalability of the application this seems to be a bad trade-off.
Hence, parallelization of the components of the multilevel framework is an important avenue for research.
Most of the work on parallelization has focused on distributed-memory systems~\cite{ZOLTAN, PARKWAY-2}.
However, shared-memory systems offer more algorithmic options, so that solution quality will be closer to the highly-tuned sequential codes.

\paragraph{Contribution.}

In this work, we present our shared-memory parallel multilevel hypergraph partitioning system \texttt{Mt-KaHyPar} that is able to partition hypergraphs with billions of pins in a matter of minutes.
To the best of our knowledge, it is the first shared-memory parallel hypergraph partitioner, and the first partitioner to implement fully-parallel direct k-way FM refinement~\cite{FM}.
We contribute parallel implementations of a clustering algorithm for coarsening, an efficient parallel contraction algorithm, a community detection algorithm to guide the coarsening, initial partitioning via recursive bipartitioning with work-stealing, a number of techniques to maintain and verify gains in parallel, as well as a scalable label propagation refinement algorithm~\cite{HMETIS-K}.
Our gain calculation techniques include an exact gain recalculation algorithm that allows for parallelizing a step of the FM algorithm that was previously deemed inherently sequential~\cite{MT-METIS-REFINEMENT}.
The algorithms are implemented using the parallel primitives of Intel's TBB library to leverage work stealing in all stages.

\paragraph{Results.}
\texttt{Mt-KaHyPar} achieves good speedups -- up to a factor of 51 on a machine with 64 cores and a harmonic mean speedup of $23.5$.
It provides similar quality as the fastest sequential partitioner \texttt{PaToH}~\cite{PATOH} in its quality preset and has similar average running time as \texttt{PaToH}'s default preset when using 10 cores of a 20-core machine.
Further, it is much faster than the high-quality sequential partitioners \hmetis~and \kahypar~and produces solutions of similar quality when given the same running time for additional repetitions.
On a second benchmark set with larger instances, it produces better solutions than the distributed hypergraph partitioner \zoltan~on 95\% of the instances, while still being a factor of 2.1 faster, when both use 64 cores.
It is even slightly faster than \texttt{PaToH}'s default preset with just 4 cores on these larger instances and produces partitions with higher quality on 83\% of the instances.

\paragraph{Outline.}
Our paper is structured according to the different phases of the multilevel framework.
We introduce notation in Section~\ref{sec:preliminaries} and discuss related work in Section~\ref{sec:related_work}.
Subsequently, we describe our coarsening algorithm in Section~\ref{sec:coarsening}, initial partitioning in Section~\ref{sec:initial_partitioning}, various gain calculation techniques in Section~\ref{sec:gains}, as well as our refinement algorithms in Sections~\ref{sec:label-propagation} - \ref{sec:rebalancing}.
Additionally, we describe engineering aspects in Section~\ref{sec:details}, before discussing our extensive experimental evaluation in Section~\ref{sec:experiments} and presenting future work in Section~\ref{sec:conclusion}.

\section{Preliminaries}\label{sec:preliminaries}

A \emph{weighted hypergraph} $H=(V,E,c,\omega)$ is defined as a set of vertices $V$ and a set of hyperedges/nets $E$ with vertex weights $c:V \to \mathbb{R}_{>0}$ and net weights $\omega:E \to \mathbb{R}_{>0}$, where each net $e$ is a subset of the vertex set $V$ (i.e., $e \subseteq V$).
The vertices of a net are called its \emph{pins}.
We extend $c$ and $\omega$ to sets in the natural way, i.e., $c(U) :=\sum_{v\in U} c(v)$ and $\omega(F) :=\sum_{e \in F} \omega(e)$.
A vertex $v$ is \emph{incident} to a net $e$ if $v \in e$.
$\mathrm{I}(v)$ denotes the set of all incident nets of $v$.
The \emph{degree} of a vertex $v$ is $d(v) := |\mathrm{I}(v)|$.
The \emph{size} $|e|$ of a net $e$ is the number of its pins.
We call two nets $e_i$ and $e_j$ with different identifiers $i \neq j$ \emph{identical} if they have the same pins $e_i = e_j$.

A \emph{$k$-way partition} of a hypergraph $H$ is a surjective function $\Partition : V \to \{1, \dots, k\}$.
The blocks $V_i := \Partition^{-1}(i)$ of $\Partition$ are the inverse images.
A $2$-way partition is also called a \emph{bipartition}.
We call $\Partition$ \emph{$\varepsilon$-balanced} if each block $V_i$ satisfies the \emph{balance constraint}: $c(V_i) \leq L_{\max} := (1+\varepsilon)\lceil \frac{c(V)}{k} \rceil$ for some parameter $\mathrm{\varepsilon} \in (0,1)$.
The \emph{balanced $k$-way hypergraph partitioning problem} (HGP) asks for an $\varepsilon$-balanced $k$-way partition of $H$.

For each net $e$, $\conset(e) := \{V_i \mid  V_i \cap e \neq \emptyset\}$ denotes the \emph{connectivity set} of $e$.
The \emph{connectivity} $\con(e)$ of a net $e$ is the cardinality of its connectivity set, i.e., $\con(e) := |\conset(e)|$.
A net is called a \emph{cut net} if $\con(e) > 1$, otherwise  (i.e., if $|\mathrm{\lambda}(e)|=1$) it is called an \emph{internal} net.
A vertex $u$ that is incident to at least one cut net is called a \emph{boundary vertex}.
The number of pins of a net $e$ in block $V_i$ is denoted by $\pinsinpart(e,V_i) := |\{V_i \cap e \}|$.

In this paper we focus on the \emph{connectivity metric} $(\lambda - 1)(\Pi) := \sum_{e \in E} (\lambda(e) - 1) \: \omega(e)$.
Given a partition $\Partition$, moving $u$ from its block $\Partition(u)$ to $V_j$ improves the connectivity metric by $\omega(\{ e \in I(u) \mid \pinsinpart(e, \Partition(u)) = 1 \}) - \omega(\{e \in I(u) \mid \pinsinpart(e, V_j) = 0 \})$.
This term is called the \emph{gain} of the move.

\section{Related Work}
\label{sec:related_work}

We refer the reader to surveys~\cite{ALPERT-SURVEY,GRAPH-SURVEY,PAPA-SURVEY,KAHYPAR-DIS} for a more general overview of (hyper)graph partitioning.
There are a number of shared-memory graph partitioners~\cite{MT-KAHIP, PT-SCOTCH, MT-METIS, MT-DIBAP}, distributed graph partitioners~\cite{PARMETIS, MPI-DIBAP, PARHIP, JOSTLE}, and distributed hypergraph partitioners~\cite{ZOLTAN, SHP, PARKWAY-2}.
We are not aware of published shared-memory hypergraph partitioning algorithms, except one paper~\cite{PARALLEL-PATOH} on parallelizing the clustering and matching algorithms used in the coarsening phase of \patoh.

The matching algorithms~\cite{PT-SCOTCH, ZOLTAN, PARMETIS, MT-METIS, JOSTLE} and clustering algorithms~\cite{MT-KAHIP, PATOH, PARALLEL-PATOH, HMETIS} used by different partitioners in the coarsening phase are directly amenable to parallelization, often with only minor quality losses~\cite{PARALLEL-PATOH}.

Parallel initial partitioning is usually done by calls to sequential multilevel algorithms with different random seeds~\cite{MT-KAHIP, PT-SCOTCH, ZOLTAN, KAPPA, PARKWAY-2}.
The parallel graph partitioners \mtmetis~and \parmetis~use parallel recursive bipartitioning and split up threads for independent recursive calls.

For the refinement phase, sequential partitioners use Fiduccia-Mattheyses (FM)~\cite{FM} or Kernighan-Lin (KL)~\cite{KL} local search heuristics to improve partitions at each level.
FM repeatedly performs a vertex move (vertex pair swap for KL) with the highest gain (may be negative), and then returns the best observed solution.
Allowing moves with negative gain enables FM and KL to escape local minima.
To speed up local search, both are often only initialized with boundary vertices.

FM and KL have been deemed notoriously difficult to parallelize due to their serial move order~\cite{MT-METIS-REFINEMENT} and are P-complete~\cite{PHARD}.
The graph partitioners \texttt{Scotch}~\cite{PT-SCOTCH}, \texttt{Jostle}~\cite{JOSTLE} and \texttt{KaPPa}~\cite{KAPPA} perform sequential $2$-way FM refinement on independent block pairs, which yields limited parallelism.
\texttt{Parkway}~\cite{PARKWAY-2}, \mtkahip, and \texttt{ParHiP}~\cite{PARHIP} use \emph{size-constrained label propagation}~\cite{LABEL_PROPAGATION, PARHIP}.
Vertices are visited in parallel.
For each vertex, the move with highest positive gain (based on the current view) is immediately performed.
The \emph{greedy refinement}~\cite{MT-METIS} of \texttt{Mt-Metis} extends label propagation by inserting boundary vertices into thread-local priority queues and repeatedly performing the highest positive gain move on each thread.
Note that greedy refinement and label propagation do not perform negative gain moves.
The \emph{hill-scanning} algorithm~\cite{MT-METIS-REFINEMENT} of \texttt{Mt-Metis} improves the greedy refinement by escaping local minima, but only uses an approximate gain definition.
If the next move from a thread-local priority queue has negative gain, it attempts to find additional negative gain moves around this move, which yield overall positive gain if performed together.
This is similar to the localized FM of KaHIP~\cite{KAFFPA} but hill-scanning performs the set of moves as soon as it yields positive combined gain, instead of rolling back to the best seen solution like KaHIP.

To the best of our knowledge, \mtkahip~is the only graph partitioner with a parallel $k$-way FM variant, but even this algorithm has a sequential part that can become a bottleneck~\cite[p.~133]{YAROSLAV_DISS}.
It is based on the \emph{localized multi-try} FM of \texttt{KaHiP}~\cite{KAFFPA}.
The threads perform FM local searches that do not overlap on vertices.
Each thread initializes its search with a different boundary vertex, and gradually expands around it by claiming neighbors of moved vertices.
After the searches terminate, the move sequences of the threads are concatenated into one, for which gains are recomputed sequentially.
The best prefix of that sequence is applied.

\texttt{SHP}~\cite{SHP} is a non-multilevel distributed hypergraph partitioner which uses a modified objective function to reduce the effect of zero gain moves.
It performs distributed label propagation but first collects the number of performed moves between blocks and approves them in a probabilistic fashion.

\section{Coarsening}\label{sec:coarsening}
The purpose of the coarsening phase is to provide a sequence of structurally similar and successively smaller (coarser) hypergraphs $\langle H_0 = H, H_1, \dots, H_r \rangle$, to enable fast improvements in the refinement phase.
We obtain a coarser hypergraph $H_{i+1}$ from the finer $H_i$ by contracting a vertex clustering of $H_i$.
Vertices of the same cluster $\cluster$ are merged into a \emph{coarse} vertex with weight $\sum_{v \in \cluster} c(v)$.
The incident nets of the coarse vertex are obtained from the union of the nets of its constituents.
For each net $e$ of $H_i$, its pins are replaced by their corresponding coarse vertex in $H_{i+1}$.
Nets consisting of a single pin are discarded.
From a set of identical nets (containing the same pins in $H_{i+1}$) we keep one representative and aggregate their weights.

The coarsening process is repeated until the coarsest hypergraph $H_r$ has less than $160 \cdot k$ vertices, which is sufficiently small for initial partitioning.
Contracting clusterings instead of matchings -- as traditionally done in graph partitioning -- was already proposed for \texttt{PaToH} and \texttt{hMetis-K}~\cite{PATOH, HMETIS-K}.
These algorithms are amenable to parallelization and there already exists a parallel implementation with good speedups on a commodity machine~\cite{PARALLEL-PATOH}.
While the different clustering algorithms share similarities, our algorithm uses less locking than those proposed in Ref.~\cite{PARALLEL-PATOH} and employs explicit on-the-fly conflict resolution.
After describing the clustering algorithm, we outline the parallel contraction algorithm
and then describe an approach to enhance coarsening by restricting contractions to densely connected regions that was already proposed for KaHyPar~\cite{KAHYPAR-CA}.

\subsection{Clustering.}\label{sec:coarsening:clustering}

Initially each vertex is in its own cluster.
We iterate over the vertices in random order in parallel.
For each vertex $u$ we compute the best cluster $\cluster$ to join according to the \emph{heavy-edge} rating function~\cite{PATOH,KAHYPAR-CA,HMETIS}
\[r(u, \cluster) := \sum_{e \in I(u) \cap I(\cluster)} \frac{\omega{(e)}}{|e| - 1},\]
which prefers clusters that share a large number of heavy nets of small size with $u$.
Additionally, $\cluster$ must fulfill a weight constraint $c(\cluster) \leq \lceil \frac{c(V)}{160\cdot k} \rceil$ to ensure that a balanced partition exists.

Each vertex can be in one of three states: \emph{singleton}, currently \emph{joining} a cluster, or part of a \emph{multi-vertex cluster}.
We use compare-and-swap instructions to ensure consistency of vertex states and fetch-and-add instructions for cluster weights.
If vertex $u$ is in a singleton cluster, we find a cluster $\cluster$ for $u$ to join, and atomically set the state of $u$ to joining.
The join operation succeeds directly if $\cluster$ is a multi-vertex cluster.
Otherwise, if $\cluster$ is a singleton, we try to atomically change the state of the vertex $v$ in $\cluster$ to joining.
If that succeeds, $u$ joins $\cluster$ and both $u$ and $v$ are marked as part of a multi-vertex cluster.
In the case that $v$ is currently trying to join some cluster, we spin in a busy-waiting loop until the state of $v$ is updated to multi-vertex cluster by some other thread, and then join its new multi-vertex cluster.
In that loop, we check if $u$ is part of a cycle of vertices trying to join each other.
If so, the vertex with the smallest ID in the cycle gets to join its desired cluster, thus breaking the cycle.

We proceed to the contraction phase after one pass over the vertices or as soon as the current clustering reduces the number of vertices by more than a factor of $2.5$.
The weight constraint on coarse vertices can lead to passes in which only few vertices are contracted.
If a pass reduces the number of vertices by less than a factor of $1.01$, we still perform the remaining contraction and then proceed to initial partitioning, even if the $160 \cdot k$ vertex limit is not reached.

\subsection{Contraction.}

Contracting vertex clusters into a coarser hypergraph consists of several steps.
First, we remap cluster IDs to a consecutive range by computing a parallel prefix sum on an array that has a one at position $i$ if $i$ is a used cluster ID and zero otherwise.
Then, we accumulate the weights of vertices in each cluster using atomic fetch-and-add instructions.
Subsequently, we map the pins of each net to their cluster IDs, and remove duplicate pins by sorting the remapped pins of each net and keeping only the first occurrence of a vertex.
Furthermore, we remove nets with a single pin and replace multiple identical nets by a single net with aggregated weight.
Finally, we construct two adjacency arrays used to iterate over pins and incident nets, by computing parallel prefix sums over the sizes of the remaining nets (resp. degrees)..

For identical net detection we parallelize the \textsc{InrSrt} algorithm of Aykanat et al.~\cite{INR-Source, INR}.
It uses \emph{fingerprints} $\sum_{v \in e} v^2$ to eliminate unnecessary pairwise comparisons between nets. 
Nets with different fingerprints or different sizes cannot be identical.
We distribute the fingerprints and their associated nets to the threads using a hash function.
Each thread sorts the nets by their fingerprint and size and then performs pairwise comparisons on the subranges of potentially identical nets.
We aggregate the weights of identical nets at a representative and mark the others as invalid.
We skip comparisons with invalidated nets.

\subsection{Community Detection Enhancement.}

We enhance the coarsening process by prohibiting contractions between sparsely connected vertices of the hypergraph.
\kahyparnocite~\cite{KAHYPAR-CA} uses community detection on the bipartite graph representation of the hypergraph~\cite{BIPARTITE-GRAPH-2} to identify \emph{densely connected} vertex clusters.
Contractions are then only allowed between vertices in the same cluster.
This substantially improves the quality of both the initial and the final partition.
We use the parallel Louvain method (PLM) of Staudt and Meyerhenke~\cite{Louvain, PARALLEL-LOUVAIN} for modularity maximization, which is a widely used objective function for community detection~\cite{modularity.np, Newman04}.
Initially, each vertex is in its own cluster.
PLM iterates in parallel over the vertices.
A vertex is moved to a cluster in its neighborhood with the highest positive modularity improvement.
This is repeated for up to 5 rounds or until less than 1\% of the vertices were moved in a round.
If any vertex was moved, the process is repeated on the graph obtained from contracting the clusters.

\section{Initial Partitioning}\label{sec:initial_partitioning}
We compute initial $k$-way partitions via multilevel recursive bipartitioning using our parallel coarsening and refinement.
Initial $2$-way partitions on the coarsest hypergraphs are computed with the same portfolio of
algorithms that is used in KaHyPar~\cite{KaHyPar-R}.
The portfolio contains 9 different sequential algorithms, including greedy hypergraph growing~\cite{PATOH}, random assignment,
hypergraph growing with label propagation, and alternating BFS~\cite{KAHYPAR-IP}.
Each is executed 20 times, which results in a lot of available task-based parallelism.
Additionally, each initial bipartition is refined using label propagation, and each thread performs sequential $2$-way FM refinement on the best initial bipartition that it worked on.
Subsequently, we select the best of the initial bipartitions for the refinement stage.

Other parallel partitioners either perform independent calls to sequential partitioners~\cite{MT-KAHIP, PT-SCOTCH, ZOLTAN, KAPPA, PARKWAY-2}, or perform recursive bipartitioning but divide the thread pool to work on independent recursive calls~\cite{PARMETIS, MT-METIS}.
We argue that employing work-stealing in coarsening and refinement, e.g., using TBB's task scheduler and its parallel primitives, is vital for the performance of initial partitioning via recursive bipartitioning.
Even coarse hypergraphs may contain many large nets~\cite{SHP}, unlike graphs, where the number of edges is bounded by~$|V|^2$. 
Additionally, the coarsening algorithm may be unable to contract the hypergraph down to the $\frac{c(v)}{160 \cdot k}$ contraction limit, as discussed at the end of Section~\ref{sec:coarsening:clustering}.
This may lead to unequally dense sub-hypergraphs in recursive calls and thus to load imbalance.

\section{Parallel Gain Calculation}\label{sec:gains}

Because vertices are moved simultaneously it is inherently difficult for parallel partitioners to compute exact gain values.
For example, the gain of a vertex move can change between its initial calculation and the actual execution of the move.
Similarly, two simultaneous moves can worsen the solution quality, even if each individual gain suggested an improvement~\cite{MT-METIS}.
Our refinement algorithms have different requirements for calculating and verifying gains, depending on whether vertices are immediately moved as they are explored (label propagation, see Section~\ref{sec:label-propagation}) or not (FM, Section~\ref{sec:fm}).
In the following, we describe a technique named \emph{attributed gains} to track the overall improvement and double-check the gain of a move (label propagation), a parallel \emph{gain cache} (FM), and a novel algorithm for \emph{recomputing exact gains} of a move sequence in parallel (FM).

\mtkahypar~optimizes the connectivity metric $\sum_{e \in E} (\con(e) -1 ) \: \omega(e)$.
The gain $g_i(u)$ of moving vertex $u$ to block $V_i$ can be written as
\vspace{-0.15cm}
\begin{align*}
g_i(u) = b(u) - p_i(u) \\
b(u) := \omega(\{ e \in I(u) \mid \pinsinpart(e, \Partition(u)) = 1 \}) \\
p_i(u) := \omega(\{ e \in I(u) \mid \pinsinpart(e, V_i) = 0\}).
\end{align*}

A straightforward approach to calculate $g_i(u)$ would be to iterate over all incident nets $e \in E$, increase the gain by $\omega(e)$ if $\pinsinpart(e, \Partition(u)) = 1$, and decrease it by $\omega(e)$ if $\pinsinpart(e, V_i) = 0$.

\subsection{Data Structures for Gain Calculation.}

In our partition data structure we store and maintain the pin counts $\pinsinpart(e,V_i)$ and connectivity sets $\conset(e)$ for each net $e$ and block $V_i$.
To save memory, we use a packed representation with $\lceil \log({\max_{e \in E} |e|}) \rceil$ bits per entry for the $\pinsinpart(e,V_i)$-values, and we use a bitset of size $k$ to store the connectivity set $\conset(e)$.
Using \emph{pop-count} and \emph{count-leading-zeroes} instructions, we can efficiently iterate over the connectivity sets.
We use a separate spin-lock for each net $e$ to synchronize writes to $\pinsinpart(e,V_i)$ and $\conset(e)$, since we cannot use atomic fetch-and-add instructions due to the packed representation.
Each thread holds at most one spin-lock at a time, and, in particular, we do not lock all incidents nets of a vertex before moving it.
Reads are not synchronized.

\subsection{Attributed Gains.}\label{sec:attributed_gain}

Since we cannot rely on the correctness of computed gains, we additionally compute an attributed gain for each move based on the synchronized writes to $\pinsinpart(e, V_i)$.
We attribute a connectivity reduction by $\omega(e)$ to the move that reduces $\pinsinpart(e, V_i)$ to zero and an increase by $\omega(e)$ for increasing it to one.
Since we only lock one incident net at a time (not all at once) when moving a vertex, this scheme may distribute the contributions to different threads.
Hence, there is still no guarantee on the correctness of one attributed gain.
However, the sum of the attributed gains equals the overall connectivity reduction.
We use attributed gains as a secondary check for our label propagation refinement algorithm and to correctly track the connectivity metric.

\subsection{Gain Caching.}\label{sec:gain_cache}

For our FM algorithm, we propose to use a gain cache~\cite{KAHYPAR-K, FromPQs} as we do not move vertices immediately after exploring them.
We use atomic fetch-and-add instructions to update the gains as vertices are moved.
Instead of storing $g_i(u)$, we store $b(u)$ and $p_i(u)$ separately for each vertex $u$, so that changes to $b(u)$ only require one update, instead of updates to $k$ gain values.
Gains can be looked up in constant time and we use $\mathcal{O}(|V| \cdot (k+1))$ memory in total.
Now, let vertex $u$ be moved from $V_a$ to $V_b$.
For each net $e \in I(u)$, we update $b(u)$ and $p_i(u)$ using atomic fetch-and-add instructions as follows:
\begin{enumerate}
\item If $\Phi(e, V_a) = 0$, then $\forall v \in e: p_a(v) = p_a(v) + \omega(e)$
\item If $\Phi(e, V_a) = 1$, then $\forall v \in e \cap V_a: b(v) = b(v) + \omega(e)$
\item If $\Phi(e, V_b) = 1$, then $\forall v \in e: p_b(v) = p_b(v) - \omega(e)$
\item If $\Phi(e, V_b) = 2$, then $\forall v \in e \cap V_b: b(v) = b(v) - \omega(e)$
\end{enumerate}

The update conditions are checked via the synchronized writes to $\Phi(e, V_a)$ and $\Phi(e, V_b)$.
Since we use only one memory location for $b(u)$, and not one for each block, the term can no longer be correctly updated after vertex $u$ is moved.
Our FM algorithm is organized in rounds in which each vertex can be moved at most once.
Therefore, we recalculate $b(u)$ for every moved vertex $u$ after each round.


\subsection{Parallel Gain Recalculation.}\label{sec:fm:gain_recalc}

Finally, we propose a parallel algorithm to recompute exact gains of vertex moves if they are supposed
to be performed in a given order, as is the case for FM.
This approach is of independent interest to graph partitioning, as \mtkahip~does this step sequentially.

Given a sequence of vertex moves $\mathcal{M} = \langle m_1, \dots, m_t \rangle$, we want to compute the exact gain of each move, as though they were performed sequentially in this order.
Let $V_i^j$ denote the vertices in block $V_i$ after the first $j-1$ moves are performed, i.e., $m_j$ is the next move to be performed.
Now, let $m_j$ move vertex $u_j$ from block $V_a$ to $V_b$.
Recall that the gain of $m_j$ is $\omega(\{ e \in I(u_j) \mid \Phi(e, V_a^j) = 1 \}) - \omega(\{ e \in I(u_j) \mid \Phi(e, V_b^j) = 0 \})$.

The following lemma yields equivalent conditions to $\Phi(e, V_a^j) = 1$ and $\Phi(e, V_b^j) = 0$ that are easier and more memory-efficient to compute in parallel than to compute all $\Phi(e, V_a^j)$ values.
The argument relies on the fact that each vertex can be moved at most once during an FM round.
We define the out-moves $\outmoves(e, V_i)$ of $V_i$ and $e \in E$ as the move indices $j$ such that $u_j \in e$ and $m_j$ moves $u_j$ out of $V_i$, i.e., those moves for which a pin of $e$ is moved out of $V_i$.
Analogously, we define the in-moves $\inmoves(e, V_i)$ such that $m_j$ moves $u_j$ into $V_i$.
Further, we define $\Psi(e, V_i) := \Phi(e, V_i^1) - |\outmoves(e, V_i)|$ as the number of pins of $e$ initially in $V_i$ that were removed from $V_i$ over the course of the FM round.
For simpler notation, we use $\max(\emptyset) = 0$ and $\min(\emptyset) = \infty$.

\begin{lemma}\label{lemma:gain_equivalence}
	Let $j \in \{1, \dots, t\}$, let $m_j$ move $u_j$ from $V_a$ to $V_b$, and let $e \in I(u)$.

	Then $\Phi(e, V_a^j) = 1$ if and only if ${\Psi(e, V_a) = 0}$, ${j = \max(\outmoves(e, V_a))}$, and ${\min(\inmoves(e, V_a)) > j}$.

	Similarly, $\Phi(e, V_b^j) = 0$ if and only if $\Psi(e, V_b) = 0$, $j = \min(\inmoves(e, V_b))$, and $\max(\outmoves(e, V_b)) < j$.
\end{lemma}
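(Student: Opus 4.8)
The plan is to reduce both equivalences to a single closed-form expression for the pin count $\Phi(e, V_i^j)$, from which the stated conditions can then be read off term by term. The decisive structural fact—already flagged in the setup—is that within one FM round each vertex is moved at most once. I would first record its consequences for a net $e$ and a fixed block $V_i$: once a pin of $e$ is moved into $V_i$ it stays there for the remainder of the round (it cannot move again); a pin moved out of $V_i$ must have been in $V_i$ continuously from the start of the round until it leaves; and therefore the index sets $\outmoves(e, V_i)$ and $\inmoves(e, V_i)$ are in bijection with the \emph{distinct} pins of $e$ that respectively leave and enter $V_i$, every out-moved pin being one of the $\Phi(e, V_i^1)$ initial pins of $V_i$ and every in-moved pin being none of them.

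Next I would derive the closed form. Since $\Phi(e, V_i^j)$ counts the pins of $e$ in $V_i$ immediately before $m_j$, i.e.\ after exactly the moves of index $<j$ have been applied, each such out-move has removed one pin and each such in-move has added one: $\Phi(e, V_i^j) = \Phi(e, V_i^1) - |\{k \in \outmoves(e, V_i) : k < j\}| + |\{k \in \inmoves(e, V_i) : k < j\}|$. Substituting $\Phi(e, V_i^1) = \Psi(e, V_i) + |\outmoves(e, V_i)|$ and rewriting the surviving out-moves as the complementary set gives the clean identity $\Phi(e, V_i^j) = \Psi(e, V_i) + |\{k \in \outmoves(e, V_i) : k \ge j\}| + |\{k \in \inmoves(e, V_i) : k < j\}|$, a sum of three nonnegative integers.

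Both equivalences then follow by forcing this sum to a prescribed value. For the first, $m_j$ moves $u_j$ out of $V_a$, so $j \in \outmoves(e, V_a)$ and the middle term is at least $1$; hence $\Phi(e, V_a^j) = 1$ holds iff each term attains its minimal admissible value: $\Psi(e, V_a) = 0$, the set $\{k \in \outmoves(e, V_a) : k \ge j\}$ equals $\{j\}$ (equivalently $j = \max(\outmoves(e, V_a))$), and $\{k \in \inmoves(e, V_a) : k < j\} = \emptyset$. Using that $j \notin \inmoves(e, V_a)$ because $m_j$ leaves $V_a$, the last condition is precisely $\min(\inmoves(e, V_a)) > j$, with $\min(\emptyset) = \infty$ covering the absence of in-moves. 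The second equivalence is the mirror image: now $j \in \inmoves(e, V_b)$, so $j$ does \emph{not} lie in the in-term $\{k : k < j\}$, and $\Phi(e, V_b^j) = 0$ forces all three terms to vanish—$\Psi(e, V_b) = 0$; no out-move of index $\ge j$, i.e.\ $\max(\outmoves(e, V_b)) < j$ with $\max(\emptyset) = 0$; and no earlier in-move, which together with $j \in \inmoves(e, V_b)$ yields $j = \min(\inmoves(e, V_b))$.

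I expect the only genuine subtlety to be the bookkeeping in the first two steps: the validity of the closed form rests entirely on the at-most-once property, and the final reading-off hinges on tracking whether the index $j$ itself belongs to $\outmoves$ or to $\inmoves$ and on the boundary behaviour of $\max$ and $\min$ over the empty set. Beyond that the argument is pure counting, and the symmetry between the $V_a$ and $V_b$ cases means the second half introduces no new ideas.
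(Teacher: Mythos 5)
Your proof is correct, and it is organized genuinely differently from the paper's. The paper proves only the first equivalence and argues each direction separately: for ``$\Rightarrow$'' it observes that $\Phi(e,V_a^j)=1$ forces $u_j$ to be the last remaining pin of $V_a^1$ (since a vertex moved into $V_a$ this round cannot move out again), deduces $j=\max(\outmoves(e,V_a))$ and $\Psi(e,V_a)=0$, and excludes an earlier in-move by contradiction; for ``$\Leftarrow$'' it combines the three conditions to conclude $\Phi(e,V_a^{j+1})=0$ and steps back one move. Your route instead establishes the single counting identity $\Phi(e,V_i^j)=\Psi(e,V_i)+|\{k\in\outmoves(e,V_i):k\ge j\}|+|\{k\in\inmoves(e,V_i):k< j\}|$ and reads off both equivalences, in both directions at once, by forcing each nonnegative summand to its minimal admissible value. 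Both arguments rest on the same structural fact---each vertex moves at most once per round, so every out-moved pin is a distinct initial pin of $V_i$ and every in-moved pin is distinct and never initially in $V_i$ (which is also what guarantees $\Psi(e,V_i)\ge 0$)---but your closed form is more uniform: it replaces the paper's ``the second can be proven similarly'' with an actual symmetric derivation, and it makes the conventions $\max(\emptyset)=0$ and $\min(\emptyset)=\infty$ do their work transparently. The bookkeeping you single out as the delicate point (whether $j$ itself belongs to $\outmoves$ or to $\inmoves$, hence whether the corresponding term is bounded below by $1$ or simply excludes $j$) is handled correctly in both cases.
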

\begin{proof}
	We only prove the first statement, the second can be proven similarly.
	"$\Rightarrow$":
	If $\Phi(e, V_a^j) = 1$ then $u_j$ is the last vertex of $V_a^1$ remaining, as no vertex moved into $V_a$ this round can be moved out again.
	Hence, it follows that $j = \max(\outmoves(e, V_a))$ and thus also $\Phi(e, V_a^1) = |\outmoves(e, V_a)|$.
	Now, we prove $l := \min(\inmoves(e, V_a)) > j$.
	First, observe that $l \neq j$ since $m_j$ moves out of $V_a$.
	Assume for the sake of contradiction that $l < j$.
	Then $\{u_l, u_j\} \subset V_a^j$ and thus $\Phi(e, V_a^j) \geq 2$, a contradiction to $\Phi(e, V_a^j) = 1$.

	"$\Leftarrow$ ":
	$\Psi(e, V_a) = 0$ means that all vertices of $V_a^1$ were moved out of $V_a$ in the current FM round.
	If $j=\max(\outmoves(e, V_a))$ then $m_j$ is the last of these moves.
	Additionally, $j < \min(\inmoves(e, V_a))$ means that no vertex was moved into $V_a$ prior to $m_j$.
  Combining these yields $\Phi(e, V_a^{j+1}) = 0$, from which $\Phi(e, V_a^j) = 1$ follows, because $m_j$ moves out of $V_a$.

\end{proof}

We iterate over $\mathcal{M}$ in parallel, and over the incident nets $e \in I(u_j)$ sequentially.
In each step, we use early-exit compare-and-swap loops to update $\max(\outmoves(e, V_a))$ and $\min(\inmoves(e, V_b))$, and atomic fetch-and-add instructions for $\Psi(e, V_a)$.
In a second step, we again iterate over $\mathcal{M}$ in parallel and use the data from the first step to assign the exact gain to each move using Lemma~\ref{lemma:gain_equivalence}.

The compare-and-swap loops require $O(|E| \cdot k)$ memory.
For nets $e$ with only few pins, it is feasible to recompute the data from Lemma~\ref{lemma:gain_equivalence} at every moved pin.
Currently, our implementation does the recomputation if $|e|=2$, thus saving $O(k)$ memory on every such edge.
This is sufficient to make the memory overhead negligible on our benchmark sets.

\section{Label Propagation}\label{sec:label-propagation}
Our first refinement algorithm is a parallel version of the greedy refinement heuristic implemented in \texttt{hMetis-K}~\cite{HMETIS-K}, which is
also called \emph{size-constrained label propagation} in \parhip.
It bears resemblance to the clustering algorithms for coarsening and community detection.
In each round, we iterate in parallel over all boundary vertices that are \emph{eligible} to be moved.
In the first round, each boundary vertex is eligible.
For each eligible vertex $u$ we compute the gain of moving it to any of its neighboring blocks by iterating over its incident nets $e \in \incnets(u)$ and their connectivity sets $\Lambda(e)$.
Then, we move $u$ to the block with the highest positive gain that does not violate the balance constraint.
We also perform moves with zero gain that improve the balance.
After the move, we attribute a gain to the move based on the partition data structure updates, as described in Section~\ref{sec:attributed_gain}.
If the attributed gain is negative, we revert the move.
We rarely observed this in practice.
If $u$ is moved, then $u$ and its neighbors are made eligible to move in the next round.
Experiments indicate that a small number of rounds (we use $5$) suffice for convergence~\cite{HMETIS-K}.
We use atomic fetch-and-add instruction to update block weights to ensure that the balance constraint is never violated.

\section{Direct $k$-way FM}\label{sec:fm}
Our second refinement algorithm is a parallel version of the classical FM local search~\cite{FM}.
Sequential FM consists of two phases: \emph{finding a sequence of moves} by repeatedly performing a feasible move with highest gain followed by \emph{reverting moves back to the prefix with the highest gain} in that sequence.
We implement a relaxed version of the first phase that performs non-overlapping \emph{localized} FM searches on different threads to obtain a global sequence of moves.
The second phase consists of two steps: recalculate the exact gain of each move in that sequence (for which we propose a novel parallel algorithm in Section~\ref{sec:fm:gain_recalc}), followed by a combined parallel prefix sum and reduce operation to find the prefix in that sequence that provides the best gain and is balanced.
In the following, we describe the first phase in more detail.

\subsection{Localized Searches.}
An FM \emph{round} starts with inserting all boundary vertices into a globally shared task queue and randomly shuffling the queue.
We perform independent \emph{localized} FM local searches to find promising vertex moves.
Each search gradually expands around a constant number (we use 25) of starting vertices polled from the task queue.
The searches are non-overlapping, i.e., threads acquire exclusive ownership of vertices, although hyperedges can touch multiple searches.
Inserting a vertex $u$ into the local search of a thread entails acquiring ownership of $u$ by atomically setting a bit, looking up the gains for moving $u$ to any of the blocks, and inserting $u$ into a priority queue (PQ) with the highest gain as key.
Gains are stored and updated using a gain cache as explained in Section~\ref{sec:gain_cache}, that is shared among threads.
In particular, gain cache updates are visible to other threads, which yields adequately accurate gains at the time a vertex is moved.

We repeatedly extract and perform the feasible move with highest gain in the PQ.
After extracting a vertex from the PQ, we recompute the best move for that vertex using the gain cache.
If the gain of this move is worse than the key stored in the PQ, we reinsert the vertex with its new gain.
This way, we update the PQs to changes made by other threads.
After moving a vertex, we insert its neighbors into the local search or update their gain in the PQ if they were already inserted.
We skip vertices that were already moved or are owned by other threads.
The localized search finishes if the PQ becomes empty, the adaptive stopping rule of \kahyparnocite~\cite{KAHYPAR-K, ADAPTIVE-STOP-RULE} is triggered, or more than a certain number of threads unsuccessfully polled the task queue.
The last condition ensures a certain level of parallelism.
After each localized search, we release ownership of vertices that were not moved so that other threads can use them.
If in the meantime another thread polled a released vertex from the task queue, we reinsert it.
We repeatedly start localized searches until the task queue is empty.
Once the task queue is empty, we proceed to the second phase, where we recalculate the gains of the global move sequence (see Section~\ref{sec:fm:gain_recalc}) and keep the highest gain prefix of the global move sequence.

\subsection{Applying Moves.}
After each localized search, we keep the moves in the best prefix of the thread-local move sequence, and revert the remaining moves.
We use fetch-and-add instructions to assign global IDs to vertex moves as they are applied to the global partition, and store them in a vector that represents the global move sequence.
We use two different ways to apply moves to the global partition.
In the \emph{global} approach, moves are immediately applied as the searches find them.
This means moves are also reverted on the global partition and other threads temporarily see these.
In the \emph{local} approach, moves are first applied to a thread-local partition and only the best local prefix is applied to the global partition.
This requires thread-local hash tables for storing changes to the partition and gains.
The local approach yields better solution quality, but can incur a large memory and running time overhead on instances with many large nets, e.g., dual SAT instances~\cite{SAT14}.
By default we use \emph{local}.
After each localized search we check if the memory consumption of the hash tables exceeds a threshold.
If the threshold is exceeded, we switch to \emph{global}.
After each FM round we check if a time limit depending on the coarsening time is exceeded.
If so, we switch to \emph{global} and stop releasing vertices at the end of localized searches.

\section{Rebalancing}\label{sec:rebalancing}
While our refinement algorithms are guaranteed to produce balanced solutions, it turns out that intermediate balance violations improve solution quality.
The intuition behind this is that localized FM searches may each find good improvements but their combination is barely infeasible.
Hence, we relax the balance constraint for the second FM phase (the rollback step) by using $\varepsilon' = 1.25 \cdot \varepsilon$ instead of $\varepsilon$.
This value was obtained from preliminary experiments.
If the partition is still imbalanced on the finest level, we rebalance it using an approach that is similar to label propagation.
We iterate over the vertices in parallel.
If a vertex is in an overloaded block and can be moved with non-negative gain, we perform the move immediately.
Negative-gain moves are collected in thread-local priority queues.
If the partition is still imbalanced after performing all non-negative gain moves, we perform the moves with the smallest increase in connectivity from the priority queues, in a second step.

\section{Engineering Aspects}\label{sec:details}
In the following, we outline engineering aspects that are necessary to enhance the performance of our system.

\subsection{Memory Allocation.}

Memory allocations are a significant bottleneck, which is why we implement a custom memory pool.
We estimate the memory needed by our data structures, categorized by the stage in the multilevel algorithm, and then allocate for the peak memory usage across all stages.
At the end of a stage, we pass the memory along to the next stage, and initialize data structures in parallel.
Further, we use the TBB scalable allocator~\cite{TBB} for concurrent memory allocations.

\subsection{Rating Aggregation.}
For community detection and coarsening, we aggregate ratings for neighboring clusters of each vertex.
This is traditionally done with a vector of size $|V|$.
Cache friendliness (on non-shared caches) is fundamental since the bandwidth to main memory is shared between threads.
We use fixed-capacity linear probing hash tables, and resort to the vector if the expected fill ratio exceeds $\frac{1}{3}$ of the capacity.
We conservatively estimate the fill grade as the vertex degree for community detection, and as the sum of the sizes of incident nets for coarsening.

\subsection{Random Shuffling.}
For community detection and coarsening, we visit vertices in a random order.
We use a block-shuffling scheme that does not guarantee uniform randomness but suffices for our purposes.
We divide the array into $2p$ equally-sized blocks.
Each thread swaps two random blocks and shuffles them.


\subsection{FM -- Boundary Vertices.}
We iterate in parallel over all vertices to collect boundary vertices in a shared task queue consisting of thread-local vectors.
To randomize the order in which vertices are polled from the task queue, each thread shuffles its local vector.
Additionally, we use a \texttt{tbb::concurrent\_queue} to reinsert vertices that were not moved at the end of a localized search.
We only reinsert vertices that were polled from the task queue by another thread (which obviously could not claim ownership).
To find starting vertices, each thread first polls its local vector, then the \texttt{tbb::concurrent\_queue}, and then tries to steal from other threads.

\section{Experiments}\label{sec:experiments}



Our code is implemented in \texttt{C++17}\footnote{\mtkahypar~is available from \url{https://github.com/kahypar/mt-kahypar}}, parallelized with work-stealing using the TBB library~\cite{TBB}, and compiled using \gpp{9.2} with the flags \texttt{-O3 -mtune=native -march=native}.
For parallel partitioners we add a suffix to their name to indicate the number of threads used, e.g., \ExpAlgo{Mt-KaHyPar}{64} for 64 threads.
We omit the suffix for sequential partitioners.


\subsection{Instances.}

Our instances are derived from four sources encompassing three application domains: the ISPD98 VLSI Circuit Benchmark Suite~\cite{ISPD98}, the DAC 2012 Routability-Driven Placement Contest~\cite{DAC}, the SuiteSparse Matrix Collection~\cite{SPM}, and the 2014 SAT Competition~\cite{SAT14}.
We translate sparse matrices to hypergraphs using the row-net model~\cite{PATOH} and SAT instances to three different hypergraph representations: \emph{literal}, \emph{primal}, and \emph{dual}~\cite{MANN-PAPA14, PAPA-MARKOV} (for more details see also~\cite{KAHYPAR-CA}).
All hypergraphs have unit vertex and net weights.

For comparison with sequential partitioners, we use the established benchmark set of Heuer and Schlag~\cite{KAHYPAR-CA}, which contains $488$ hypergraphs (referred to as set A).
Since set A contains many small instances, we composed a new benchmark set (referred to as set B) consisting of $94$ much larger hypergraphs to evaluate speedups and to compare \mtkahypar~with parallel partitioners.
It consists of $42$ instances from the SuiteSparse Matrix Collection~\cite{SPM} that represents a randomly sampled subset of all available matrices with more than
$15$ million non-zero entries, all ten~\texttt{DAC} instances \cite{DAC} of set A, as well as the $24$ largest SAT instances of set A extended with
$18$ additional larger instances from~\cite{SAT14}.
\footnote{The benchmark sets and detailed statistics of their properties are publicly available from \url{http://algo2.iti.kit.edu/heuer/alenex21/}.}

\subsection{System.}

We use two different machines. 
Machines of type A are nodes of a cluster with Intel Xeon Gold 6230 processors ($2$ sockets with $20$ cores each) running at $2.1$ GHz with $96$GB RAM.
These are used for the comparison with sequential partitioners, as these experiments take roughly 3.5 CPU years overall.
Machine B is an AMD EPYC Rome 7702P ($1$ socket with $64$ cores) running at $2.0$--$3.35$ GHz with $1024$GB RAM.

On set A and machine A, we compare \mtkahypar~to the following sequential hypergraph partitioners: \kahyparca~($n$-level partitioner with similiar algorithmic components as \mtkahypar), the latest version \kahyparhfc (extends \texttt{KaHyPar-CA} with flow-based refinement), the recursive bisection version (\hmetisr) of \texttt{hMetis}~$2.0$~\cite{HMETIS}, as well as the speed (\patohs), default (\patohd), and quality preset (\patohq) of \texttt{PaToH} $3.3$~\cite{PATOH}.

On set B and machine B, we compare \mtkahypar~to the distributed partitioner \texttt{Zoltan} $3.83$~\cite{ZOLTAN}.
Unfortunately, we are not able to include \parkway. 
The \texttt{Parkway} version available on GitHub\footnote{\url{https://github.com/parkway-partitioner/parkway}} either hangs infinitely or crashes with a segmentation fault. 
Additionally, we asked the author of \texttt{PaToH} for a version with parallel coarsening~\cite{PARALLEL-PATOH} but were unable to obtain it.
Since only one phase is parallel, even the good speedups of factor 5-6 on 8 cores in that phase do not translate to overall speedups greater than a factor of 2.


\subsection{Methodology.}

Experiments on set A are executed with $k \in \{2,4,8,16,32,64,128\}$, $\varepsilon = 0.03$, ten different seeds and a time limit of eight hours.
Experiments on set B are executed with $k \in \{2,8,16,64\}$, $\varepsilon = 0.03$, five seeds and a time limit of two hours.
Each partitioner optimizes the connectivity metric which we also refer to as the quality of a partition.

For each instance (hypergraph and number of blocks $k$), we aggregate quality and running times using the arithmetic
mean (over all seeds).
To further aggregate over multiple instances, we use the harmonic mean for relative speedups, and the geometric mean for absolute running times, to give each instance a comparable influence.
Runs with imbalanced partitions are not excluded from aggregated running times.
For runs that exceeded the time limit we use the time limit in the aggregates.
Only if all runs of an algorithm produced imbalanced partitions on an instance, we mark it with \ding{55} in the plots, and similarly mark
instances for which all runs exceeded the time limit with \ClockLogo.

To compare the solution quality of different algorithms, we use \emph{performance profiles}~\cite{PERFORMANCE-PROFILES}.
Let $\mathcal{A}$ be the set of all algorithms we want to compare, $\mathcal{I}$ the set of instances, and $q_{A}(I)$ the quality of algorithm
$A \in \mathcal{A}$ on instance $I \in \mathcal{I}$.
For each algorithm $A$, we plot the fraction of instances ($y$-axis) for which $q_A(I) \leq \tau \cdot \min_{A' \in \mathcal{A}}q_{A'}(I)$, where $\tau$ is on the $x$-axis.
For $\tau = 1$, the $y$-value indicates the percentage of instances for which an algorithm $A \in \mathcal{A}$ performs best.
Note that these plots relate the quality of an algorithm to the best solution and thus do not permit a full ranking of three or more algorithms.

Additionally, we use the \emph{effectiveness tests} proposed by Ahkremtsev et. al.~\cite{MT-KAHIP} to compare solution quality when two algorithms are given a similar running time, by performing virtual repetitions with the faster algorithm.
We generate virtual instances that we compare using performance profiles.
Consider two algorithms $A$ and $B$, and an instance $I$.
We first sample one run of both algorithms on $I$.
Let $t_A^1, t_B^1$ be their running times and assume that $t_A^1 \geq t_B^1$.
We sample additional runs without replacement for $B$ until their accumulated time exceeds $t_A^1$ or all runs have been sampled.
Let $t_B^2, \dots, t_B^l$ denote their running times.
We accept the last run with probability $(t_A^1 - \sum_{i = 1}^{l-1} t_B^i) / t_B^l$ so that the expected time for the sampled runs of $B$ equals $t_A^1$.
The solution quality is the minimum out of the sampled runs.
For each instance, we generate $20$ virtual instances.

\subsection{Scalability.}

\DTLloaddb[noheader, keys={key,value}]{speedup}{experiments/speedup/speedup.dat}

\begin{table}
  \footnotesize
  \captionsetup{font=footnotesize}
  \centering
  \caption{Harmonic mean speedups over all instances and instances with a single-threaded running time $\ge 100$s for
           the total partition time (T), community detection (CD), coarsening (C), initial partitioning (IP), label propagation (LP) and FM.}
  \label{tbl:hmean_speed_ups}
  \begin{tabular}{lr|rrrrrr}
  \multicolumn{2}{r|}{Num. Threads} & T & CD & C & IP & LP & FM \\
  \midrule
  \multirow{3}{*}{All}        & $4$ &  $\placeholder{speedup}{speedUpTotalTimeP4S0}$   & $\placeholder{speedup}{speedUpCommunityDetectionP4S0}$   &
                                           $\placeholder{speedup}{speedUpCoarseningP4S0}$  & $\placeholder{speedup}{speedUpInitialPartitioningP4S0}$ &
                                           $\placeholder{speedup}{speedUpLabelPropagationP4S0}$ & $\placeholder{speedup}{speedUpFMP4S0}$ \\
                              & $16$ & $\placeholder{speedup}{speedUpTotalTimeP16S0}$   & $\placeholder{speedup}{speedUpCommunityDetectionP16S0}$   &
                                           $\placeholder{speedup}{speedUpCoarseningP16S0}$  & $\placeholder{speedup}{speedUpInitialPartitioningP16S0}$ &
                                           $\placeholder{speedup}{speedUpLabelPropagationP16S0}$ & $\placeholder{speedup}{speedUpFMP16S0}$ \\
                              & $64$ & $\placeholder{speedup}{speedUpTotalTimeP64S0}$   & $\placeholder{speedup}{speedUpCommunityDetectionP64S0}$  &
                                           $\placeholder{speedup}{speedUpCoarseningP64S0}$  & $\placeholder{speedup}{speedUpInitialPartitioningP64S0}$ &
                                           $\placeholder{speedup}{speedUpLabelPropagationP64S0}$ & $\placeholder{speedup}{speedUpFMP64S0}$ \\
  \multirow{3}{*}{$\ge 100$s} & $4$  & $\placeholder{speedup}{speedUpTotalTimeP4S100}$ & $\placeholder{speedup}{speedUpCommunityDetectionP4S100}$ &
                                           $\placeholder{speedup}{speedUpCoarseningP4S100}$ & $\placeholder{speedup}{speedUpInitialPartitioningP4S100}$ &
                                           $\placeholder{speedup}{speedUpLabelPropagationP4S100}$ & $\placeholder{speedup}{speedUpFMP4S100}$ \\
                              & $16$ & $\placeholder{speedup}{speedUpTotalTimeP16S100}$ & $\placeholder{speedup}{speedUpCommunityDetectionP16S100}$ &
                                           $\placeholder{speedup}{speedUpCoarseningP16S100}$ & $\placeholder{speedup}{speedUpInitialPartitioningP16S100}$ &
                                           $\placeholder{speedup}{speedUpLabelPropagationP16S100}$ & $\placeholder{speedup}{speedUpFMP16S100}$ \\
                              & $64$ & $\placeholder{speedup}{speedUpTotalTimeP64S100}$ & $\placeholder{speedup}{speedUpCommunityDetectionP64S100}$ &
                                           $\placeholder{speedup}{speedUpCoarseningP64S100}$ & $\placeholder{speedup}{speedUpInitialPartitioningP64S100}$ &
                                           $\placeholder{speedup}{speedUpLabelPropagationP64S100}$ & $\placeholder{speedup}{speedUpFMP64S100}$ \\
  \midrule
  \multicolumn{2}{r|}{\footnotesize{Inst. $\ge 100$s $[\%]$}} & $\placeholder{speedup}{percentageInstancesGreaterTotalTimeS100}$ & $\placeholder{speedup}{percentageInstancesGreaterCommunityDetectionS100}$ &
                                                     $\placeholder{speedup}{percentageInstancesGreaterCoarseningS100}$ & $\placeholder{speedup}{percentageInstancesGreaterInitialPartitioningS100}$ &
                                                     $\placeholder{speedup}{percentageInstancesGreaterLabelPropagationS100}$ & $\placeholder{speedup}{percentageInstancesGreaterFMS100}$ \\
  \end{tabular}
\vspace*{-0.6cm}
\end{table}

\begin{figure*}


  \vspace*{-2cm}
  \centering
  \externalizedfigure{0}{experiments/speedup/speedup.tex} %
  \vspace*{-2.5cm}
  \caption{Arithmetic mean speedup per instance (points) and cumulative harmonic mean speedup (lines) of \mtkahypar~with $p \in \{4,16,64\}$ for each algorithmic component on set B. }
  \label{fig:speedup}
\end{figure*}

\begin{figure*}


  \centering
  \begin{minipage}{.3\textwidth}
    \hspace{-0.75cm}
    \externalizedfigure{1}{experiments/kahypar_benchmark_set/kahypar_quality_patoh.tex} %
    \vspace{-0.5cm}
  \end{minipage} %
  \begin{minipage}{.3\textwidth}
    \externalizedfigure{2}{experiments/kahypar_benchmark_set/kahypar_quality_high_quality.tex} %
    \vspace{-0.5cm}
  \end{minipage} %
  \begin{minipage}{.3\textwidth}
    \hspace{0.25cm}
    \externalizedfigure{3}{experiments/big_benchmark_set/big_quality.tex} %
    \vspace{-0.5cm}
  \end{minipage} %
  \caption{Performance profiles comparing the solution quality of \mtkahypar~with \texttt{PaToH} (left),
           \texttt{hMetis} and \texttt{KaHyPar} (middle) on set A, as well as \texttt{Zoltan} and \texttt{PaToH} (right) on set B.}
  \label{fig:quality}
  \vspace{-0.25cm}
\end{figure*}

\begin{figure*}
  \hspace{0.65cm}
	\begin{minipage}{.3\textwidth}
		\vspace{0.185cm}
		\hspace{-0.75cm}
		\externalizedfigure{4}{experiments/kahypar_benchmark_set/mt_kahypar_vs_hmetis_r.tex} %
		\vspace{-0.5cm}
	\end{minipage} %
	\begin{minipage}{.3\textwidth}
		\vspace{0.185cm}
		\externalizedfigure{5}{experiments/kahypar_benchmark_set/mt_kahypar_vs_kahypar_ca.tex} %
		\vspace{-0.5cm}
	\end{minipage} %
	\begin{minipage}{.3\textwidth}
		\hspace{0.25cm}
		\vspace{-0.185cm}
		\externalizedfigure{6}{experiments/kahypar_benchmark_set/mt_kahypar_vs_kahypar_hfc.tex} %
		\vspace{-0.5cm}
	\end{minipage} %
\caption{Effectiveness tests comparing \ExpAlgo{Mt-KaHyPar}{10} with \texttt{hMetis-R}, \texttt{KaHyPar-CA} and \texttt{KaHypar-HFC} on set A.
         The faster algorithm performs (up to 10) repetitions so that both algorithms invest similar amounts of time.}
\label{fig:effectiveness_tests_quality}
\end{figure*}

\begin{figure*}[!ht]


  \centering
  \begin{minipage}{.49\textwidth}
    \externalizedfigure{7}{experiments/kahypar_benchmark_set/kahypar_relative_runtime.tex} %
    \vspace{-0.25cm}
  \end{minipage} %
  \begin{minipage}{.49\textwidth}
    \externalizedfigure{8}{experiments/big_benchmark_set/big_relative_runtime.tex} %
    \vspace{-0.25cm}
  \end{minipage} %
  \caption{Running times of different sequential resp. parallel partitioners relative
           to \ExpAlgo{Mt-KaHyPar}{10} on set A (left) resp. \ExpAlgo{Mt-KaHyPar}{64} on set B (right).}
  \label{fig:running_time}
  \vspace{-0.25cm}
\end{figure*}

In Figure~\ref{fig:speedup} and Table~\ref{tbl:hmean_speed_ups}, we summarize the speedups of \mtkahypar~with varying number of threads ($p \in \{4,16,64\}$) for each algorithmic component on set B.
In the plot, we represent the speedup of each instance as a point and the cumulative harmonic mean speedup over
all instances with a single-threaded running time $\ge x$ seconds with a line.

The overall harmonic mean speedup of \mtkahypar~is $\placeholder{speedup}{speedUpTotalTimeP4S0}$ for $p = 4$,
$\placeholder{speedup}{speedUpTotalTimeP16S0}$ for $p = 16$ and $\placeholder{speedup}{speedUpTotalTimeP64S0}$ for $p = 64$.
If we only consider instances with a single-threaded running time $\ge 100$s, we achieve a harmonic mean speedup
of $\placeholder{speedup}{speedUpTotalTimeP64S100}$ for $p = 64$.
For $p = 4$, the speedup is at least $\placeholder{speedup}{minSpeedUpOfBest90PercentP4}$ on over $90\%$ of our instances.

Community detection and coarsening share many similarities in their implementation and both show reliable speedups for increasing number of threads.
For the remaining three components, we observe that longer single-threaded execution leads to substantially better speedups.
For initial partitioning, increasing the number of threads from $16$ to $64$ can even be harmful for instances with a single-threaded running time of one second or less.
While label propagation refinement yields the least promising speedups, it is substantially faster than the other components, taking less than 10\% of the overall running time on over 95\% of the instances, for $p=64$.
Our algorithms do not perform any expensive arithmetic calculations.
Hence, scalability is limited by memory bandwidth, which makes achieving perfect speedups difficult.

\subsection{Comparison with other Systems.}

\DTLloaddb[noheader, keys={key,value}]{sequential}{experiments/kahypar_benchmark_set/sequential.dat}
\DTLloaddb[noheader, keys={key,value}]{parallel}{experiments/big_benchmark_set/parallel.dat}
Figure~\ref{fig:quality} compares the quality of \mtkahypar~with different sequential (left and middle) and parallel hypergraph partitioners (right).
If we compare \ExpAlgo{Mt-KaHyPar}{10} with each sequential partitioner individually, \ExpAlgo{Mt-KaHyPar}{10} produces partitions with better quality than
\texttt{PaToH-S}, \texttt{PaToH-D}, \texttt{PaToH-Q}, \texttt{hMetis-R}, \texttt{KaHyPar-CA}, \texttt{KaHyPar-HFC} on
$\placeholder{sequential}{mtKaHyParvsPaToHSMtKaHyPar10Tau100}\%$, $\placeholder{sequential}{mtKaHyParvsPaToHDMtKaHyPar10Tau100}\%$,
$\placeholder{sequential}{mtKaHyParvsPaToHQMtKaHyPar10Tau100}\%$, $\placeholder{sequential}{mtKaHyParvshMetisRMtKaHyPar10Tau100}\%$,
$\placeholder{sequential}{mtKaHyParvsKaHyParCAMtKaHyPar10Tau100}\%$, and $\placeholder{sequential}{mtKaHyParvsKaHyParHFCMtKaHyPar10Tau100}\%$
of the instances of set A, respectively.
For set A, we use 10 cores with \mtkahypar~as this is a typical number of available cores in current commodity workstations.
On set B, \ExpAlgo{Mt-KaHyPar}{64} computes better partitions than \ExpAlgo{Zoltan}{64} on $\placeholder{parallel}{mtKaHyParvsZoltanMtKaHyParTau100}\%$ of the instances. Further, the quality of at least $\placeholder{parallel}{mtKaHyParvsZoltanZoltanTauInv110}\%$ of the partitions produced by \ExpAlgo{Zoltan}{64} are more than $10\%$ worse than those of \ExpAlgo{Mt-KaHyPar}{64}.

In the effectiveness tests presented in Figure~\ref{fig:effectiveness_tests_quality}, where algorithms receive similar running times via virtual repetitions, \ExpAlgo{Mt-KaHyPar}{10} finds better solutions than \texttt{hMetis-R} and similar quality solutions as \texttt{KaHyPar-CA}.
However, the more advanced approaches of \texttt{KaHyPar-HFC} cannot be compensated by additional repetitions of \mtkahypar.
Effectiveness tests with the other algorithms can be found in Figure~\ref{fig:appendix:effectiveness} in Appendix~\ref{sec:appendix:effectiveness}.

Figure~\ref{fig:running_time} shows running times of each partitioner relative to \mtkahypar.
We additionally report absolute running times as well as their geometric mean in Figure~\ref{fig:appendix:absolute_times} in Appendix~\ref{sec:appendix:absolute_times}.
On set A, \ExpAlgo{Mt-KaHyPar}{10} is consistently faster than \texttt{hMetis}, \texttt{KaHyPar} and \texttt{PaToH-Q}.
\texttt{PaToH-D} is faster on 60\% of the instances, and \texttt{PaToH-S} on 72\%, however, this is due to the small instances.
The geometric mean times of \texttt{PaToH-D} and \ExpAlgo{Mt-KaHyPar}{10} are similar at 1.17s vs 1.5s.
On the larger instances of set B, \ExpAlgo{Mt-KaHyPar}{4} is slightly faster \texttt{PaToH-D} while providing better solutions on 83\% of the instances, and being able to scalably utilize larger numbers of threads.
As shown in Figure~\ref{fig:appendix:graph_partitioning_and_thread_dependent_quality} (right) of Appendix~\ref{sec:appendix:graph_partitioning_and_thread_dependent_quality}, increasing the number of threads does not adversely affect solution quality of \texttt{Mt-KaHyPar}.
\ExpAlgo{Mt-KaHyPar}{64} dominates \ExpAlgo{Zoltan}{64} since it is 2.1 times faster in the geometric mean while achieving much better quality -- see Figure~\ref{fig:quality}.

Figure~\ref{fig:appendix:graph_partitioning_and_thread_dependent_quality} (left) in Appendix~\ref{sec:appendix:graph_partitioning_and_thread_dependent_quality} provides an additional comparison with parallel graph partitioners.
We see that \texttt{Mt-KaHyPar} achieves similar quality as \texttt{Mt-KaHiP} while only incurring a slowdown by a factor of $2$ in the geometric mean.
The slowdown is expected since we use substantially more complicated data structures.

\section{Conclusion and Future Work}\label{sec:conclusion}

We propose a shared-memory multilevel hypergraph partitioning algorithm that guarantees balanced solutions and exhibits good speedups and solution quality, even compared with sequential algorithms.
This is achieved by carefully parallelizing and engineering each phase of the multilevel framework.
To further improve solution quality, future work includes parallelizing flow-based refinement techniques~\cite{REBAHFC, KAHYPAR-HFC, KAHYPAR-MF} as well as KaHyPar's n-level hierarchy approach~\cite{KAHYPAR-K, KaHyPar-R}.
Additionally, we want to speed up our FM implementation, and improve the solution quality of label propagation, e.g., using hill-scanning~\cite{MT-METIS-REFINEMENT} or caching negative gain moves to enable otherwise infeasible high-gain moves.
Finally, our parallel gain recalculation can replace the sequential recalculation in the parallel graph partitioner \texttt{Mt-KaHIP}.


\printbibliography

\appendix

\onecolumn

%



\section{Absolute Running Times}\label{sec:appendix:absolute_times}

\begin{figure*}[!ht]


  \centering
  \begin{minipage}{.95\textwidth}
    \centering
    \externalizedfigure{10}{experiments/kahypar_benchmark_set/kahypar_runtime.tex} %
  \end{minipage} %
  \begin{minipage}{.49\textwidth}
    \externalizedfigure{11}{experiments/big_benchmark_set/big_runtime.tex} %
  \end{minipage} %
  \begin{minipage}{.49\textwidth}
    \externalizedfigure{12}{experiments/graph/graph_runtime.tex} %
  \end{minipage} %
  \vspace{-0.5cm}
  \caption{
	  	Absolute running times of all partitioners used in our experiments in a combined box-and-scatter-plot.
		The number under each box-plot is the geometric mean running time.
		Solutions that violate the balance constraint are still included in the geometric mean running time.
		Instances exceeding the time limit are included as well, with running time set to the time limit.
		The top plot shows the sequential partitioners and \texttt{Mt-KaHyPar} on benchmark set A with a time limit of 8 hours, the bottom left shows \texttt{Mt-KaHyPar}, \texttt{Zoltan} and \texttt{PaToH-D} on benchmark set B with a time limit of 2 hours, and the bottom right plot shows \texttt{Mt-KaHyPar} and the parallel graph partitioners \texttt{Mt-KaHIP}, \texttt{Mt-Metis} and \texttt{ParHIP} on the graph benchmark set used in the \texttt{Mt-KaHIP} paper~\cite{MT-KAHIP} without a time limit.
	}
	\label{fig:appendix:absolute_times}
\end{figure*}

\clearpage

\section{Graph Partitioning and Quality with Different Number of Threads}\label{sec:appendix:graph_partitioning_and_thread_dependent_quality}

\begin{figure*}[!ht]


  \centering
  \begin{minipage}{.49\textwidth}
    \externalizedfigure{13}{experiments/graph/graph_quality.tex} %
    \vspace{-0.5cm}
  \end{minipage} %
  \begin{minipage}{.49\textwidth}
    \externalizedfigure{14}{experiments/quality/quality.tex} %
    \vspace{-0.5cm}
  \end{minipage} %
  \caption{Performance profiles comparing the quality of different parallel graph partitioners ($p = 64$) with \ExpAlgo{Mt-KaHyPar}{64} on the benchmark set used for \texttt{Mt-KaHIP} (left) and comparing the quality of \mtkahypar~with increasing number of threads on benchmark set B (right).}
  \label{fig:appendix:graph_partitioning_and_thread_dependent_quality}
  \vspace{-0.5cm}
\end{figure*}


\section{Effectiveness Tests}\label{sec:appendix:effectiveness}

\begin{figure*}[!ht]


  \centering
  \begin{minipage}{.3\textwidth}
    \vspace{0.185cm}
    \hspace{-0.75cm}
    \externalizedfigure{15}{experiments/kahypar_benchmark_set/mt_kahypar_vs_patoh_s.tex} %
    \vspace{-0.5cm}
  \end{minipage} %
  \begin{minipage}{.3\textwidth}
    \vspace{0.185cm}
    \externalizedfigure{16}{experiments/kahypar_benchmark_set/mt_kahypar_vs_patoh_d.tex} %
    \vspace{-0.5cm}
  \end{minipage} %
  \begin{minipage}{.3\textwidth}
    \hspace{0.25cm}
    \vspace{-0.185cm}
    \externalizedfigure{17}{experiments/kahypar_benchmark_set/mt_kahypar_vs_patoh_q.tex} %
    \vspace{-0.5cm}
  \end{minipage} %

  \begin{minipage}{.3\textwidth}
    \vspace{0.185cm}
    \hspace{-0.75cm}
    \externalizedfigure{18}{experiments/kahypar_benchmark_set/mt_kahypar_1_vs_mt_kahypar_10.tex} %
    \vspace{-0.5cm}
  \end{minipage} %
  \begin{minipage}{.3\textwidth}
    \vspace{0.185cm}
    \externalizedfigure{19}{experiments/kahypar_benchmark_set/mt_kahypar_1_vs_mt_kahypar_20.tex} %
    \vspace{-0.5cm}
  \end{minipage} %
  \begin{minipage}{.3\textwidth}
    \hspace{0.25cm}
    \vspace{-0.185cm}
    \externalizedfigure{20}{experiments/kahypar_benchmark_set/mt_kahypar_10_vs_mt_kahypar_20.tex} %
    \vspace{-0.5cm}
  \end{minipage} %
  \vspace{-0.25cm}

  \caption{Effectiveness tests comparing \ExpAlgo{Mt-KaHyPar}{10} with \texttt{PaToH}, and itself with $1$ and $20$ threads on benchmark set A.
  	The faster algorithm performs (up to 10) repetitions so that both algorithms invest similar amounts of time.}\label{fig:appendix:effectiveness}
\end{figure*}

\twocolumn

\end{document}